\documentclass{scrartcl}
\usepackage{authblk}
\usepackage{amsmath}
\usepackage{amssymb}
\usepackage{amsthm}
\usepackage{csquotes}
\usepackage[british]{babel}
\usepackage[
    giveninits,
    backend=biber,
    sorting=none,
    citestyle=numeric-comp,
    maxnames=10,
    style=vancouver,
]{biblatex}
\usepackage{xr-hyper}
\usepackage{hyperref}
\usepackage{tabularx}
\usepackage{booktabs}
\usepackage{cleveref}
\usepackage{fmtcount}
\usepackage{environ}
\usepackage{graphicx}
\DeclareGraphicsExtensions{.pdf}

\externaldocument{supplement}

\usepackage[usenames]{xcolor}
\definecolor{mpl0}{HTML}{1f77b4}

\hypersetup{
    colorlinks=true,
    linkcolor=mpl0,
    citecolor=mpl0,
    urlcolor=mpl0,
}

\newcommand{\expit}{\sigma}
\newcommand{\transpose}[1]{#1^\intercal}
\DeclareMathOperator{\logit}{logit}
\DeclareMathOperator{\logp}{log1p}
\DeclareMathOperator{\logsumexp}{logsumexp}

\newcommand{\abs}[1]{\left|#1\right|}
\newcommand{\card}[1]{\left|#1\right|}
\newcommand{\population}{N}
\newcommand{\seeds}{U}
\newcommand{\blauspace}{\mathbb{B}}
\newcommand{\twocol}[1]{\multicolumn{2}{>{\hsize=\dimexpr2\hsize+3\tabcolsep+2\arrayrulewidth\relax}X}{#1}}

\newcommand{\titlecaption}[2]{\caption[#1]{\emph{#1} #2}}
\newtheorem{prop}{Proposition}
\crefname{prop}{proposition}{propositions}

\addbibresource{main.bib}


\title{Inference of a universal social scale and segregation measures using social connectivity kernels}
\author{Till Hoffmann}
\author{Nick S. Jones}
\affil{Department of Mathematics, Imperial College London}
\date{}

\begin{document}

\maketitle

\begin{abstract} 
    How people connect with one another is a fundamental question in the social sciences, and the resulting social networks can have a profound impact on our daily lives. Blau offered a powerful explanation: people connect with one another based on their positions in a social space. Yet a principled measure of social distance, allowing comparison within and between societies, remains elusive.

    We use the connectivity kernel of conditionally-independent edge models to develop a family of segregation statistics with desirable properties: they offer an intuitive and universal characteristic scale on social space (facilitating comparison across datasets and societies), are applicable to multivariate and mixed node attributes, and capture segregation at the level of individuals, pairs of individuals, and society as a whole. We show that the segregation statistics can induce a metric on Blau space (a space spanned by the attributes of the members of society) and provide maps of two societies.

    Under a Bayesian paradigm, we infer the parameters of the connectivity kernel from eleven ego-network datasets collected in four surveys in the United Kingdom and United States. The importance of different dimensions of Blau space is similar across time and location, suggesting a macroscopically stable social fabric. Physical separation and age differences have the most significant impact on segregation within friendship networks with implications for intergenerational mixing and isolation in later stages of life.
\end{abstract}

\begin{refsection}

\section{Introduction\label{sec:introduction}}

Peter Blau proposed that individuals connect with one another based on their positions in a high-dimensional space~\cite{Blau1977}, e.g.\ a space spanned by demographic attributes. With an accrual of large-scale survey data, we now have access to demographic and relational information internationally, for whole societies, and over time~\cite{McPherson2001,McPherson2006,Mossong2008}. Despite this wealth of data, we lack two important quantities: (a)~a natural notion of distance in this space, allowing us to determine how far apart individuals are in society, and (b)~a universal characteristic scale, allowing the distance between pairs of individuals in one society to be compared to another society with different social dimensions. We will argue that the probability of forming a friendship, intimately related to work on homophily by Blau and his successors~\cite{McPherson2001}, offers both a universal characteristicscale and a notion of distance.

Homophily, the tendency for people to connect with others who are alike, is one of the most robust observations of the social sciences and shapes how our society is connected~\cite{McPherson2001}. Quantifying homophily is not only important for understanding why social ties form between some people yet not between others, but the manifestation of homophily as poorly-connected social networks can have a significant impact on dynamics unfolding upon them~\cite{Golub2012}. For example, users of online social networks, such as Facebook and Twitter, tend to connect with others who hold similar political views~\cite{Boutyline2017}. They are more likely to be exposed to information that confirms rather than challenges their beliefs~\cite{Bakshy2015}. An ``echo chamber'' effect ensues, leading to polarised opinions~\cite{DeMarzo2003}. Homophily can also have a detrimental impact on public health: clusters of individuals who mutually reinforce their belief that vaccinations are harmful can raise the likelihood of significant disease outbreaks~\cite{Salathe2008}---even if the vaccination rate is above herd immunity levels on average.

Homophily can be observed in friendships~\cite{Currarini2009, Hipp2009}, networks of discussion partners~\cite{McPherson2006}, communication networks~\cite{Wang2013, Leo2016}, marital ties~\cite{Blau1984}, and online social networks~\cite{Chang2010}. Relationships are homogeneous with respect to a wide range of attributes, including age~\cite{Marsden1988,Smith2014}, sex~\cite{Smith2014}, ethnicity~\cite{Chang2010, Blumenstock2013, Currarini2009}, education~\cite{McPherson2006, Smith2014,Johnson1989}, occupation~\cite{Chan2004}, income~\cite{Leo2016, Wang2013, Johnson1989}, religion~\cite{Platt2012}, parental~\cite{Johnson1989} and marital status~\cite{Kalmijn2007}, political ideology~\cite{Bakshy2015, Boutyline2017}, and geographical location~\cite{Lambiotte2008, Expert2011, Backstrom2010, Scellato2011, Illenberger2013}.

Segregation statistics, also referred to as segregation measures, are often used to quantify homophily~\cite{Rodriguez-Moral2016, Bojanowski2014}. Many approaches are based on co-presence in organisational units such as schools~\cite{Orfield2014}, voluntary associations~\cite{Popielarz1999}, occupations~\cite{Charles1995}, or census tracts~\cite{Reardon2004}, and we refer to them as \emph{organisational} statistics. Typically, they compare how the distribution of demographic attributes within organisational units differs from the distribution of attributes in the general population. Whilst organisational statistics are applicable whenever data can be stratified according to a variable of interest, they cannot capture segregation at smaller scales than the strata~\cite{Blumenstock2013}. For example, the ethnic composition of a set of schools may be representative of the general population, indicating that there is no (organisational) segregation. But the social networks \emph{within} schools often exhibit strong ethnic homophily~\cite{Currarini2009,Moody2001}. Organisational statistics cannot capture such social segregation.

\emph{Social} statistics of segregation, such as the assortativity coefficient~\cite{Newman2003a}, overcome these limitations by explicitly considering the interactions amongst individuals~\cite{Blumenstock2013}, but have their own difficulties: first, they usually rely on the existence of discrete groups, such as sex, ethnicity, or religion~\cite{Bojanowski2014}, and they are not applicable to continuous attributes, such as age or income. Attributes are often discretised~\cite{Lam-Morgan2012, Kalmijn2007, Kim2012}, but the boundaries between categories always suffer from some degree of arbitrariness~\cite{Reardon2004}. Second, segregation for multiple attributes can be quantified independently, but \textcite{Pelechrinis2016}, who consider a multivariate generalisation of the assortativity coefficient, note that ``a formal metric that is generally applicable'' remains elusive. Furthermore, social statistics are typically defined as summary statistics of a fully-observed social network. Consequently, we cannot easily quantify uncertainties.

In practice, the study of homophily is complicated by the scarcity of high-quality data~\cite{Butts2012, Blumenstock2013}: we need social network data together with demographic information for each person. Online social networks and the widespread use of mobile phones provide us with detailed information about connections between individuals~\cite{Golder2014}, and seemingly private traits such as socioeconomic status~\cite{Blumenstock2015, Luo2017}, sexual orientation~\cite{Wang2017}, age, gender, and political ideology can be inferred~\cite{Kosinski2013}. Unfortunately, network features are often used to predict demographic attributes~\cite{Wang2013,Blumenstock2015,Luo2017,Kosinski2013}, which would confound any study of homophily. Furthermore, ``data are [\ldots] too revealing in terms of privacy'' but, at the same time, do not provide enough information for researchers~\cite{Golder2014}. Individuals can be identified in anonymised social networks~\cite{Backstrom2011, Narayanan2008}, and augmenting the network data with demographic information would make re-identification even easier.

However, censuses and large-scale surveys collect comprehensive demographic information from respondents but usually lack data about their associates. Fortunately, some surveys have included questions about respondents' friends~\cite{Huckfeldt1983, Johnson1989}, discussion partners~\cite{Marsden1987, McPherson2006}, or support networks~\cite{Kalmijn2007, Banerjee2013}. The questions used to elicit social ties provide an imperfect observation of the immediate neighbourhood of respondents~\cite{Marin2004,Eagle2015,Eveland-Jr.2017}.

Building on the successes of conditionally-independent edge models~\cite{Snijders2011} and, in particular, latent space models for social networks~\cite{Hoff2002,Hoff2008}, we consider a generative model for social networks whose members occupy a multidimensional Blau space in \cref{sec:model}. We discuss desirable properties for social segregation statistics, and, using the generative network model, we develop a suite of statistics applicable to arbitrary attributes in \cref{sec:model-based-segregation}. The statistics capture segregation at different scales: single individuals, pairs of individuals, and society as a whole. Because of both their probabilistic foundations and their construction from the universal notion of a social tie, the segregation statistics have a universal scale, i.e.\ one unit of segregation has the same implications across different societies and at different times. We show that the segregation statistic for pairs of individuals can be a metric and can thus be used to quantify distance in Blau space. We illustrate the statistics with a simple example, and we show that it reduces to a well-known segregation statistic if the attributes are univariate and categorical: the natural logarithm of \citeauthor{Moody2001}'s $\alpha$ index~\cite{Moody2001}.

In \cref{sec:inference}, we derive the posterior for parameters of the conditionally-independent edge model given partial observations of social networks obtained from surveys. We apply our approach to nine existing datasets from the United Kingdom and two from the United States in \cref{sec:application}. Our analysis reveals that the effects of homophily on society are remarkably stable in both countries regardless of time and the specific nature of relationships. Using the suite of segregation statistics, we find that physical separation and age are the most important factors contributing to the segregation of society. In \cref{sec:discussion}, we provide recommendations for conducting surveys to infer homophily in social networks and discuss future work.

\section{Methods}

\subsection{Generative network model\label{sec:model}}

We consider a generative model for social networks for a population of $n$ individuals $\population$ who occupy a Blau space $\blauspace$ spanned by their demographic attributes, such as age, income, or sex. In contrast to common latent space models~\cite{Hoff2002, Hoff2008}, the attributes are observed, although \textcite{Hoff2002} also consider an extension including covariates. The $q$-dimensional attribute vector $x_i\in\blauspace$ for each individual $i\in \population$ is drawn independently from a distribution $P(x_i)$ of demographic attributes. Elements of the attribute vector can take continuous, ordinal, or categorical values. Connections between individuals are encoded by the binary adjacency matrix $A$ such that $A_{ij}=1$ if $j$ considers $i$ to be a friend and $A_{ij}=0$ otherwise. We assume that people do not interact with themselves such that $A_{ii}=0$ for all $i$, and that connections are undirected, although social ties need not be reciprocated in general~\cite{Ball2013}.

Given the positions of two individuals $i$ and $j$ in Blau space, we assume that connections form independently with probability $\rho(x_i, x_j)$, i.e.\ edges are conditionally-independent given the attributes of nodes~\cite{Fienberg2012}. The assumption of conditionally-in\-de\-pen\-dent edges can be problematic. For example, it is not possible to reproduce heavy-tailed degree distributions if the node density is homogeneous and the kernel is translationally invariant~\cite{Barnett2007}. Furthermore, the average degree scales linearly with the number of nodes unless the connectivity kernel $\rho$ is adjusted to compensate~\cite{Caron2017}. Nevertheless, we use conditionally-independent edge models because the connectivity kernel is intuitive, and they can capture salient features of social networks. For example, nodes in high-density regions have larger degrees on average~\cite{Barnett2007}. Similarly, members of the ethnic majority have more social ties in social networks in US high schools~\cite{Currarini2009}.

\subsection{Developing a model-based segregation statistic\label{sec:model-based-segregation}}

We have so far emphasised the desirability of metrics on social spaces with a universal characteristic scale (in the sense of being comparable between societies). After first developing universal social segregation statistics, including a notion of social separation, we will formulate both a metric and a notion of scale in \cref{sec:segregation-distance}.

In addition to addressing the challenges mentioned in \cref{sec:introduction}, a social segregation statistic should satisfy the following properties: first, the statistic should be insensitive to the overall edge density to facilitate comparison of segregation across different networks. Otherwise, the segregation statistic would depend on the size of the population because the edge density scales as $n^{-1}$ if the average degree is approximately constant. Second, following \textcite{Freeman1978}, we would like the statistic to capture the notion that segregation places ``restrictions on the access of people to one another''. Third, the statistic should be easily interpretable, and it should have a natural notion of the absence of segregation when individuals form connections without regard to their positions in Blau space. For example, the difference of within- and between-group ties considered by \textcite{Krackhardt1988} depends on the sizes of the groups even if there is no homophily: there is no natural reference point.

A single statistic cannot capture the complexities of social networks, and we develop a family of statistics applicable at different scales: (a)~the \emph{social separation} between any two individuals, (b)~the \emph{isolation} experienced by any one individual, and (c)~the \emph{social strain} experienced by society as a whole. Starting at the microscopic level, we define the \emph{social separation} between two individuals $i$ and $j$ with attributes $x$ and $y$ as the relative log odds for $j$ to connect with $i$ compared to someone who is alike: the log odds intuitively capture the probabilistic nature of the conditionally independent edge model. In particular,
\begin{align}
    \varphi(x, y) &= \logit \rho(y,y) - \logit\rho(x,y),\label{eq:social-separation}\\
    \text{where }\logit\rho &= \log\left(\frac{\rho}{1-\rho}\right)\nonumber
\end{align}
are the log odds for a connection to form with probability $\rho$~\cite{Hastie2009}. The probability $\rho(y, y)$ for $j$ to connect with someone who is alike serves as a reference point, and the statistic does not depend on the overall edge density. The social separation $\varphi$ may be understood as the isolation experienced by $i$ with attributes $x$ as a result of the behaviour of $j$ with attributes $y$. The statistic is zero if two individuals have the same demographic attributes or if they do not discriminate with respect to the attributes on which they differ. For a homophilous connectivity kernel, the statistic is positive and is a \emph{semi-metric} for Blau space; we will consider a family of connectivity kernels for which $\varphi$ is a true metric in \cref{sec:segregation-distance}.

\begin{prop}\label{prop:segregations-semi-metric}
    If the connectivity kernel is homophilous, symmetric, and the probability $\rho(x,x)$ to connect with others who are alike is independent of $x$, the social separation $\varphi$ is a \emph{semimetric}~\cite{Wilson1931}: it satisfies the properties of a metric, including non-negativity, symmetry, and the identity of indiscernables---except the triangle inequality.
\end{prop}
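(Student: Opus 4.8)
The plan is to verify, one at a time, the three properties that distinguish a semimetric from a full metric, using the monotonicity of $\logit$ together with the three hypotheses separately. First, for \textbf{non-negativity}, I would note that $\logit$ is strictly increasing on $(0,1)$, so the sign of $\varphi(x,y)=\logit\rho(y,y)-\logit\rho(x,y)$ is controlled entirely by the comparison of $\rho(y,y)$ and $\rho(x,y)$. The homophily hypothesis, in the sense that an individual is at least as likely to connect with someone alike as with anyone else, $\rho(y,y)\ge\rho(x,y)$, then gives $\varphi(x,y)\ge0$ directly.

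For the \textbf{identity of indiscernibles}, the forward direction is immediate: substituting $x=y$ yields $\varphi(x,x)=\logit\rho(x,x)-\logit\rho(x,x)=0$. The converse asks that $\varphi(x,y)=0$ force $x=y$; since $\logit$ is injective, $\varphi(x,y)=0$ is equivalent to $\rho(x,y)=\rho(y,y)$, and here I would appeal to homophily in its strict form (affinities coincide only at coincident points) to conclude $x=y$. This is the step most sensitive to how ``homophilous'' is formalised: if the kernel fails to discriminate along some attribute, then two points differing only in that attribute give $\varphi=0$ without being equal, so the result would weaken to a statement on the quotient of $\blauspace$ by the non-discriminating directions. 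Pinning down the exact homophily assumption is the delicate point here.

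The \textbf{symmetry} property is where the real work lies, since $\varphi$ is manifestly asymmetric as written: its reference point $\rho(y,y)$ involves only the second argument. Both remaining hypotheses must enter. The assumption that $\rho(x,x)$ is independent of $x$ lets me equate the two reference terms: writing $\rho_0$ for the common self-affinity, $\logit\rho(y,y)=\logit\rho(x,x)=\logit\rho_0$, so that
\begin{align}
    \varphi(x,y)-\varphi(y,x)=\logit\rho(y,x)-\logit\rho(x,y).\nonumber
\end{align}
Symmetry of the kernel, $\rho(x,y)=\rho(y,x)$, then makes the right-hand side vanish, giving $\varphi(x,y)=\varphi(y,x)$. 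I expect this to be the main obstacle conceptually, because neither hypothesis suffices alone: without constant self-affinity the two reference points differ, and without kernel symmetry the remaining terms fail to cancel, so the argument must deploy both in tandem.

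Finally, I would state explicitly that the triangle inequality is \emph{not} claimed, consistent with the \enquote{semi} qualifier; establishing it demands additional structure on $\rho$, which is deferred to the specific family of connectivity kernels treated in the later section on segregation distance.
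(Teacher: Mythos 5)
Your proposal is correct and follows essentially the same route as the paper's proof: non-negativity from monotonicity of $\logit$ plus homophily, symmetry from the constant self-affinity $\rho(x,x)$ combined with kernel symmetry, and the identity of indiscernables from the strict inequality $\rho(x,y)<\rho(y,y)$ for $x\neq y$. Your aside about how strictly ``homophilous'' must be formalised (and the quotient-space weakening otherwise) is a fair observation, but it does not change the argument, which matches the paper's.
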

\begin{proof}
    First, $\varphi(x,y)\geq0$ because homophily implies that $\rho(x, y)<\rho(y, y)$ and $\logit$ is a monotonically increasing function. Second, $\varphi(x,y)=\varphi(y,x)$ because the first term of \cref{eq:social-separation} is constant by assumption and the second is symmetric because the kernel is symmetric. Third, the statistic is zero for any two individuals with the same attributes by substitution into \cref{eq:social-separation}. Similarly, if $\varphi(x,y)=0$, then $x=y$ because $\rho(x, y)<\rho(y, y)$ due to homophily.
\end{proof}

The less likely two people are to connect, the larger the social separation between them. The assumptions required for \cref{prop:segregations-semi-metric} to hold may seem restrictive, but they are satisfied by most studies of spatial networks~\cite{Barnett2007,Lambiotte2008,Butts2012,Expert2011}.

Defining social separation in terms of a generative model, i.e.\ using the connectivity kernel rather than a summary statistic of a particular dataset, provides us with two advantages: first, any uncertainty associated with inferred connectivity kernels naturally propagates to the segregation statistic, as discussed in \cref{sec:application}. Second, we can easily consider the properties of the segregation statistic under a variety of generative models without having to resort to computationally-expensive Monte Carlo simulations.

For example, consider a stochastic block model (SBM)~\cite{Snijders2011} with $K$ blocks, intra-group connection probability $\rho_\mathrm{same}$, and inter-group connection probability $\rho_\mathrm{different}<\rho_\mathrm{same}$. Substituting into \cref{eq:social-separation}, the social separation between two nodes with block membership $x$ and $y$ is
\begin{equation}
    \varphi(x, y) = \left(1-\delta_{xy}\right)\left(\logit\rho_\mathrm{same}-\logit\rho_\mathrm{different}\right),\label{eq:social-separation-sbm}
\end{equation}
where $\delta_{xy}$ is the Kronecker delta. The social separation only depends on block membership, and it is not affected by the size of each block. For members of different blocks, $\varphi$ is the difference of log odds ratios for the existence of intra-group ties as opposed to inter-group ties. The social separation is equal to the natural logarithm of the $\alpha$-index proposed by \textcite{Moody2001} for categorical attributes $x$, but $\varphi$ is applicable to arbitrary attributes and connectivity kernels.

The social separation $\varphi(x,y)$ is not sufficient to quantify segregation at the level of an individual: we also need to consider the distribution $P(y)$ of attributes $y$ of all other members of society, such as their age, sex, or other demographics. For an individual with attributes $x$, we define the \emph{social isolation}
\begin{equation}
    \phi(x) = \int dy \ P(y) \varphi(x, y),\label{eq:segregation-individual}
\end{equation}
which quantifies the average social separation between an individual with attribute $x$ and other members of society. For the SBM, we substitute \cref{eq:social-separation-sbm} into \cref{eq:segregation-individual} and obtain
\begin{equation}
    \phi(x) = (1-P(x))\left(\logit\rho_\mathrm{same}-\logit\rho_\mathrm{different}\right),\label{eq:segregation-individual-sbm}
\end{equation}
where $P(x)$ is the probability to belong to block $x$, and we have used the identity $\sum_{y=1}^K P(y)\left(1-\delta_{xy}\right)= 1 - P(x)$. Members of all blocks experience the same degree of isolation if the blocks are of the same size. If the sizes are unequal, minorities experience more isolation and majority groups experience less isolation. Indeed, ethnic minorities in schools tend to be more isolated and have fewer social ties~\cite{Currarini2009}. The off-diagonal terms of the Hessian of the social isolation $\phi$ quantify interactions, such as the joint effect of age and ethnic differences.

To understand how segregated society is as a whole, we would like to aggregate the social isolation $\phi$, but the appropriate statistic depends on the question at hand. For example, if we wanted to study the most isolated subpopulation of society, we should consider $\max_{x\in\blauspace}\phi(x)$. Here, we take a utilitarian approach and, in line with \cref{eq:segregation-individual}, define the \emph{social strain} as
\begin{equation}
    \Phi = \int dx\ P(x) \phi(x),\label{eq:social-strain}
\end{equation}
which quantifies the average social separation amongst members of the society. It is zero when individuals do not discriminate based on attributes, and it can reach arbitrarily large values in a society comprising multiple groups that are completely disconnected. For the SBM, we substitute \cref{eq:segregation-individual-sbm} into \cref{eq:social-strain} and obtain
\begin{align}
    \Phi&= \gamma\left(\logit\rho_\mathrm{same}-\logit\rho_\mathrm{different}\right),\nonumber \\
    \text{where }\gamma&=1 - \sum_{x=1}^K P^2(x) \label{eq:segregation-distributional-factor}
\end{align}
is an index of dispersion~\cite{Moody2001} and accounts for the relative sizes of the $K$ blocks. The social strain is maximal when the groups are of equal size. If one of the blocks is larger, the social strain and index of dispersion approach zero as the sizes of the minority blocks decrease: members of the majority group experiences little social isolation. It is unsurprising that there is no social strain if the society is homogeneous, but the utilitarian approach has a serious limitation: it has little concern for minorities that are not well integrated in society. For equal group sizes, the social strain increases with the number of groups, asymptotically reaching a maximum value of $\logit\rho_\mathrm{same}-\logit\rho_\mathrm{different}$.

\subsection{Distance and scale in Blau space\label{sec:segregation-distance}}

The social separation takes a simple form if the probability for two individuals to connect is a logistic kernel~\cite{Hoff2002}, i.e.\
\begin{equation}
    \logit\rho(x, y, \theta) = \sum_{l=1}^p\theta_l f_l(x,y),\label{eq:logistic-kernel}
\end{equation}
where the $p$-dimensional vector $\theta_l$ parametrises the kernel, and $f(x, y)$ is a set of $p$-dimensional features that are predictive of the connection probability, such as the age difference $f_\text{age}=\abs{x_\mathrm{age}-y_\mathrm{age}}$. Intersectionality can be accounted for by including interaction terms in the feature set. The social separation between $x$ and $y$ comprises contributions from the features of the logistic kernel:
\begin{align}
    \varphi(x,y) &= \sum_{l=1}^p\varphi_l(x, y),\label{eq:social-separation-logistic}\\
    \text{where }\varphi_l(x, y) &= \theta_l\left(f_l(y, y) - f_l(x, y)\right)\label{eq:social-separation-contribution}
\end{align}
is the contribution due to a single feature $l$. In fact, $\varphi$ is a true metric for many logistic connectivity kernels.

\begin{prop}
    The social separation $\varphi(x,y)$ is a metric if the kernel is homophilous, i.e.\ $\theta_l<0$, and each feature $f_l(x, y)$ is a constant or a positive affine transform of a metric $d_l(x, y)$, i.e.
    \begin{equation}
        f_l(x, y) = a_l d_l(x, y) + b_l,\label{eq:segregation-feature-affine}
    \end{equation}
    where $a_l > 0$ and $b_l$ are the parameters of the affine transform.
\end{prop}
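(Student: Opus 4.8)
The plan is to reduce $\varphi$ to a positive linear combination of the metrics $d_l$ and then verify the metric axioms term by term. First I would evaluate the single-feature contribution $\varphi_l$ of \cref{eq:social-separation-contribution} under the stated hypotheses. For a constant feature $f_l$ the difference $f_l(y,y)-f_l(x,y)$ vanishes, so $\varphi_l\equiv 0$ and such features may be discarded. For a feature of the form \cref{eq:segregation-feature-affine}, using $d_l(y,y)=0$ gives $f_l(y,y)=b_l$ and $f_l(x,y)=a_l d_l(x,y)+b_l$, so that $\varphi_l(x,y)=-\theta_l a_l\, d_l(x,y)$. Writing $c_l=-\theta_l a_l$, homophily ($\theta_l<0$) together with $a_l>0$ ensures $c_l>0$.

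Summing over features via \cref{eq:social-separation-logistic} then expresses the social separation as
\[
    \varphi(x,y)=\sum_{l}c_l\, d_l(x,y),
\]
a positive linear combination (with strictly positive weights $c_l>0$) of the metrics $d_l$. The remaining task is to show that such a combination inherits the metric axioms. Non-negativity, symmetry, and the identity of indiscernibles already follow from \cref{prop:segregations-semi-metric}, whose hypotheses I would first confirm to hold here: the kernel is symmetric because each $d_l$ is symmetric, the diagonal $\logit\rho(x,x)=\sum_l\theta_l b_l$ is independent of $x$ since $d_l(x,x)=0$, and homophily holds because $\varphi(x,y)=\sum_l c_l d_l(x,y)>0$ whenever $x\neq y$.

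What \cref{prop:segregations-semi-metric} does not supply is the triangle inequality, the one genuinely new ingredient. Here I would simply propagate the triangle inequality of each constituent metric: for any $x,y,z$ we have $d_l(x,z)\leq d_l(x,y)+d_l(y,z)$, and multiplying by $c_l>0$ and summing over $l$ yields $\varphi(x,z)\leq\varphi(x,y)+\varphi(y,z)$. Combined with the inherited properties, this establishes that $\varphi$ is a metric.

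The step requiring the most care is the identity of indiscernibles rather than the triangle inequality. In applications the individual features are typically only pseudometrics on the whole of Blau space---the age difference $\abs{x_\mathrm{age}-y_\mathrm{age}}$, for instance, ignores every coordinate other than age---so $d_l(x,y)=0$ need not force $x=y$ for a single $l$. The clean statement therefore relies on reading each non-constant $d_l$ as a genuine metric; more generally one needs the non-constant features to be jointly separating, so that $\sum_l c_l d_l(x,y)=0$ (which, with $c_l>0$, forces every $d_l(x,y)=0$) still implies $x=y$. I would flag this separation requirement as the substantive hypothesis underpinning the result, the rest of the argument being routine.
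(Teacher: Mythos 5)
Your proof is correct and follows essentially the same route as the paper's: reduce each single-feature contribution to $\varphi_l(x,y) = -\theta_l a_l\, d_l(x,y)$ with positive coefficient, inherit non-negativity, symmetry, and the identity of indiscernibles from \cref{prop:segregations-semi-metric}, and obtain the triangle inequality by propagating that of each $d_l$ term by term. Your closing caveat---that the identity of indiscernibles really requires the non-constant $d_l$ to separate points (individually or jointly), since in practice each feature is only a pseudometric on the full Blau space---is a fair observation the paper leaves implicit, but it does not change the argument under the proposition's stated hypothesis that each $d_l$ is a metric.
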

\begin{proof}
    According to \cref{prop:segregations-semi-metric}, the social separation $\varphi(x,y)$ is a semi-metric, and it comprises contributions from individual features, as illustrated by \cref{eq:social-separation-logistic}. Showing that each contribution $\varphi_l(x, y)$ satisfies the triangle inequality is sufficient for $\varphi(x, y)$ to satisfy it, i.e.\ we require
    \begin{equation}
        \varphi_l(x, z) \leq \varphi_l(x, y) + \varphi_l(y, z)\label{eq:segregation-metric-inequality}
    \end{equation}
    for all $l$. Substituting \cref{eq:segregation-feature-affine} into \cref{eq:segregation-metric-inequality} yields
    \begin{equation}
        -\theta_l a_l d_l(x, z)\leq -\theta_l a_l\left[d_l(x, y) + d_l(y, z)\right],\label{eq:segregation-metric}
    \end{equation}
    where we have used the metric property $d_l(x, x) = 0$ for all $x$, and the constant $b_l$ in \cref{eq:segregation-feature-affine} vanishes by \cref{eq:social-separation-contribution}. The inequality in \cref{eq:segregation-metric} holds because $\theta_l<0$ for homophilous kernels, $a_l>0$ by assumption, and $d_l(x, y)$ is a metric. \Cref{eq:segregation-metric-inequality} is trivially satisfied for a constant feature, such as a bias term controlling the overall edge density.
\end{proof}

In other words, the social separation statistic is a true measure of \emph{distance} in the social space with a probabilistic interpretation if features are themselves measures of distance, including all the features we consider subsequently. This observation puts Peter Blau's~\cite{Blau1977} hypothesis that ``the macrostructure of societies can be defined as a multidimensional space of social positions among which people are distributed and which affect their social relations'' on a sound statistical footing: \emph{fitting conditionally-independent edge models allows us to learn the metric of Blau space}. The metric has a universal scale: \emph{one unit of social separation has the same probabilistic meaning independent of the society under consideration}, facilitating comparison across disparate datasets. Even if two societies have different Blau space dimensions, e.g.\ a society might exist that strongly discriminates based on characteristics which are not found in other societies, the social separation between a pair of individuals has a common meaning. 

\subsection{Parameter inference given ego network data\label{sec:inference}}

A representative sample of dyads between individuals together with their demographic attributes is not generally available. However, a number of surveys have collected information about the social ties of respondents using name-generator questions which elicit social ties by asking respondents to nominate their friends~\cite{Kalmijn2007}, individuals they feel close to~\cite{Hipp2009}, or discussion partners~\cite{Marsden1987,McPherson2006}. To generate examples of disconnected dyads, we consider a random sample of pairs of individuals. To account for this non-ignorable data collection process, we introduce a variable $I_{ij}\in\{0,1\}$ indicating whether a particular dyad $A_{ij}$ was observed~\cite[chapter 8]{Gelman2013}. The available data thus comprise the demographic attributes $x$ of individuals included in the sample and the dyad state $A_{ij}$ (1 if $i$ and $j$ are connected and $0$ otherwise) if it was observed, i.e.\ $I_{ij}=1$. Adapting the argument presented by \textcite{King2001} to a Bayesian paradigm, we consider the posterior distribution over kernel parameters $\theta$ given the available data:
\begin{equation}
    P(\theta|A, f, I=1)\propto P(A|\theta,f,I=1)P(\theta),\label{eq:parameter-posterior}
\end{equation}
where $P(\theta)$ is the kernel parameter prior, and $f=f(x,y)$ are features sufficient to evaluate the connectivity kernel given demographic attributes $x$ and $y$. The observed-data likelihood is
\begin{equation}
    P(A|\theta,f,I=1)=\frac{P(f|A,\theta,I=1)P(A|\theta,I=1)}{P(f|\theta,I=1)}.\label{eq:observed-likelihood}
\end{equation}
Considering the first term in the numerator of \cref{eq:observed-likelihood}, we note that the distribution over kernel features given the state $A$ of the dyad does not depend on whether it was included in the sample or not. More formally,
\begin{align}
    P(f|A,\theta,I=1)&=P(f|A,\theta)\nonumber\\
    &=\frac{P(A|f,\theta)P(f|\theta)}{P(A|\theta)}.\label{eq:conditional-equivalence}
\end{align}
Turning to the denominator in \cref{eq:observed-likelihood}, we find
\begin{align}
    P(f|\theta,I=1)&=\sum_{\alpha=0}^1 P(f|A=\alpha,\theta,I=1)P(A=\alpha|\theta,I=1)\nonumber\\
    &=P(f|\theta)\sum_{\alpha=0}^1 P(A=\alpha|f,\theta)\frac{P(A=\alpha|\theta,I=1)}{P(A=\alpha|\theta)},\label{eq:likelihood-denominator}
\end{align}
where we used the identity in \cref{eq:conditional-equivalence} to arrive at the second line. Substituting \cref{eq:conditional-equivalence,eq:likelihood-denominator} into \cref{eq:observed-likelihood}, the observed-data likelihood is
\begin{align}
    P(A|\theta,f,I=1)&=\frac{P(A|f,\theta)r(A)}{\sum_{\alpha=0}^1 P(A=\alpha|f,\theta)r(\alpha)},\label{eq:observed-likelihood-final}\\
    \text{where }r(\alpha)&=\frac{P(A=\alpha|\theta,I=1)}{P(A=\alpha|\theta)}\nonumber
\end{align}
is the ratio of prevalences of dyad state $\alpha$ in the sample and the general population. In practice, we approximate the prevalence ratio $r$ using the empirical sample prevalence and prior knowledge about the prevalence in the population. The posterior can be evaluated by substituting \cref{eq:observed-likelihood-final} into \cref{eq:parameter-posterior}, and we can thus infer the parameters $\theta$ from ego network data. See \cref{app:numeric-stability} for details on how to evaluate the observed-data log-likelihood in a numerically stable fashion and \cref{app:inference-validation} for a validation of the inference methodology using synthetic data. For logistic connectivity kernels, the observed-data likelihood in \cref{eq:observed-likelihood-final} resembles a conventional case-control likelihood, e.g.\ as used by \textcite{Smith2014}.

\section{Application\label{sec:application}}

\subsection{Ego network data collected in surveys}

The social ties identified through name-generator questions depend on the nature of the relationship, the mode of administration of the questionnaire (e.g.\ face-to-face, telephone interview, or online survey), and the interviewer~\cite{Marin2004,Eagle2015}. Consequently, we do not expect the kernel parameters inferred from different datasets to be completely consistent. In the following investigation of ego networks, we restrict the nature of relationships to friends who are not relatives as much as the available data permit: we are interested in \emph{voluntary} association amongst members of the population rather than the social structures they were born into~\cite{Kalmijn2007}.

Demographic information about nominees can be collected either by asking seeds about their friends' demographic background~\cite{Marsden1987,McPherson2006} or by conducting follow-up surveys with nominated friends~\cite{Johnson1989}. The latter seems preferable because respondents may not have complete information about their social contacts. For example, the age of nominees in the British Household Panel Survey (BHPS), a dataset we consider in \cref{sec:survey-bhps-usoc}, is 60\% more likely to be an integer multiple of ten than it is for seeds---presumably because seeds round the age of their friends to the nearest decade. In anticipation of such challenges, the coding for the nominees is often coarser than for seeds. To compare the demographic attributes of seeds and nominees we need to unify the coding (see \cref{app:coding-and-features} for details for each dataset). Unfortunately, follow-up surveys require additional resources to interview the nominees and may suffer from low response rates.

\subsection{General Social Survey\label{sec:survey-gss}}

The General Social Survey (GSS) is a nationally-representative face-to-face survey of non-institutionalised adults living in the US. Demographic attributes of seeds are collected regularly and include age, sex, ethnicity, religion, and education~\cite{Marsden1987,Marsden1988}. In 2004, respondents were asked about the demographic background of people ``with whom they discuss important matters'', which tends to elicit close ties~\cite{Marin2004}. We omit all nominees who are not considered to be friends or who are family. Some of the demographic attributes of seeds and nominees are missing because respondents did not know or refused to provide the information, and we drop dyads associated with individuals with one or more missing attributes, as shown in \cref{tbl:survey-sample-size}. Such a complete-case analysis can introduce biases if the data are not missing completely at random, but handling the missing data in a principled fashion would require us to develop a model for demographic attributes~\cite{Pigott2001}.

The coding of age and sex is consistent amongst seeds and nominees. We aggregate the detailed coding of ethnic and religious attributes of seeds to match the coding of nominees, as shown in \cref{tbl:survey-gss-coding}. Kernel features include the absolute age and ordinal education level difference as well as binary indicators for differences along the sex, ethnicity, and religion dimensions. For each demographic attribute, we define a feature for the logistic kernel in \cref{eq:logistic-kernel}, as shown in \cref{tbl:survey-gss-coding}. To standardise the features $f(x_i,x_j)$, we subtract their mean and divide non-binary features by twice their standard deviation~\cite{Gelman2008a}; binary features are not rescaled. The statistics are calculated with respect to a random sample of pairs of seeds. Feature standardisation allows us to compare kernel parameters more easily~\cite{Gelman2008a} and simplifies the formulation of priors: we use independent, weakly-informative Cauchy priors for the kernel parameters such that
\begin{equation*}
    P(\theta_l) \propto {\left[1+{\left(\frac{\theta_l}{\alpha_l}\right)}^2\right]}^{-1}.
\end{equation*}
Following \textcite{Gelman2008}, we chose the scale parameters $\alpha_l=2.5$ for $l>1$ to represent our weak prior belief that changing a feature by one standard deviation is unlikely to change the log odds by more than five: the independent Cauchy distributions regularise the kernel parameters by placing significant prior probability near zero, but their heavy tails allow for significant departures from zero should the data be in support of large parameters. We set $\alpha_1=10$ because the parameter $\theta_1$ associated with the constant bias term could change significantly depending on the population size~\cite[chapter 16]{Gelman2013}.

\begin{figure}
    \includegraphics{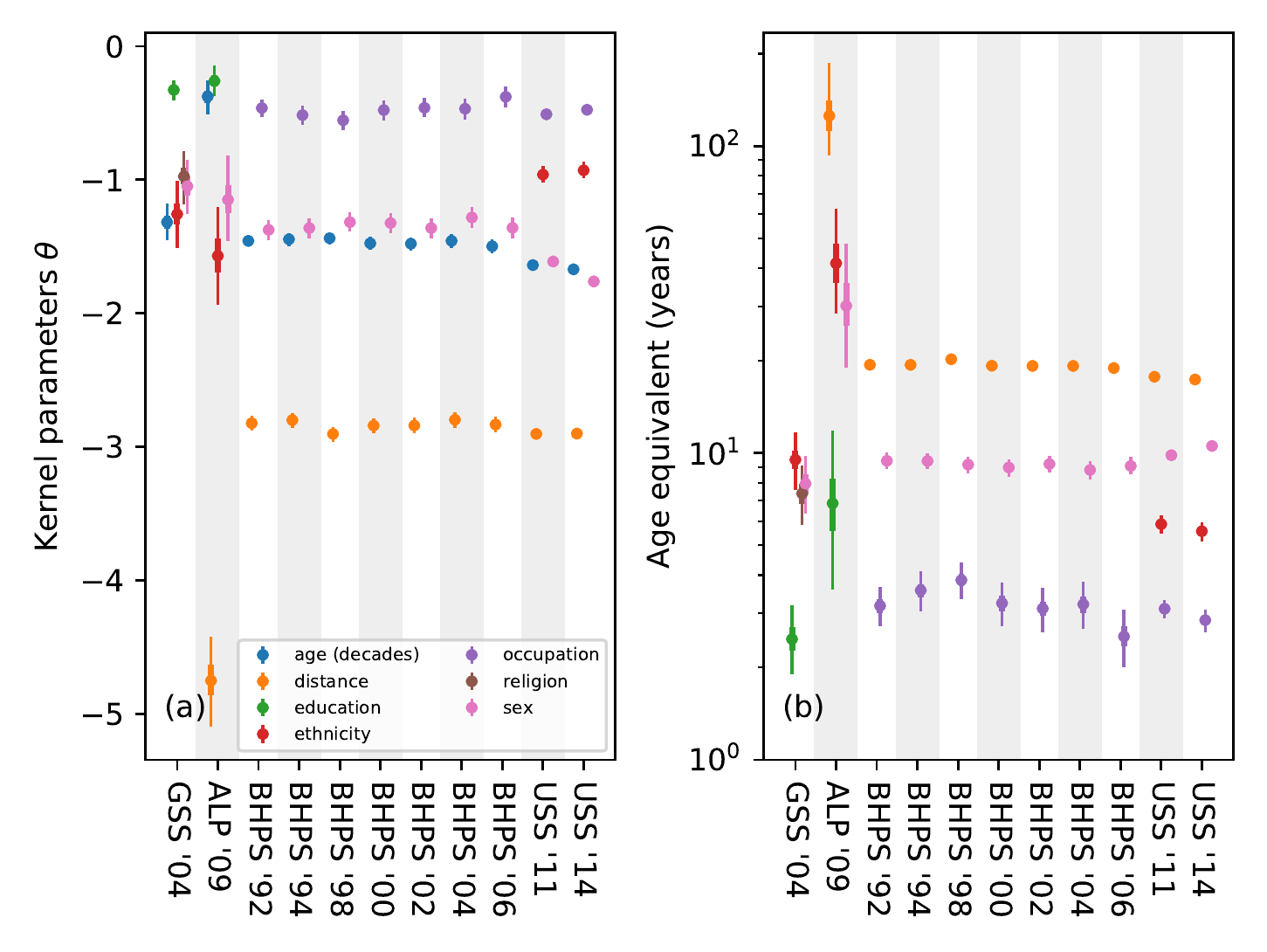}
    \titlecaption{Age and physical separation have a strong impact on connection probabilities, and converting parameters into age equivalents makes feature comparison more intuitive.}{
    Panel~(a) shows kernel parameters inferred from ego network data for each dataset.
    Panel~(b) shows age equivalents. For binary features (sex, occupation, religion, ethnicity, and distance for the American Life Panel), the equivalent number of years corresponds to a change from having the same attribute to having a different attribute. Age equivalents for the American Life Panel are overestimated (see \cref{sec:survey-alp} for details).
    Markers represent the posterior median, thick error bars the interquartile range, and thin error bars the 95\% credible interval.\label{fig:parameters-and-age-equivalents}}
\end{figure}

The inference is performed in two steps: first, we maximise the posterior with respect to the parameters $\theta$ using a gradient ascent algorithm. Second, we run a Metropolis-Hastings algorithm to draw samples from the posterior~\cite{Hastings1970}. Summary statistics of the posterior are shown in \cref{fig:parameters-and-age-equivalents}~(a). The connection probabilities decrease quickly with increasing age differences: the odds of connection are reduced by a multiplicative factor of about 0.3 per decade. Ethnic, sex, and religious differences all seem to have a similar effect and decrease the odds by a factor of about 0.3 each; a difference of one educational level reduces the odds by a factor of 0.7.

\textcite{Hipp2009} used the logarithm of physical separation as a benchmark to translate the effect of other attributes into distance-equivalents. We instead use age as a benchmark because age is available for most datasets and is typically coded uniformly in years. In contrast, physical separation is often not available or coded heterogeneously across different datasets. For example, the American Life Panel only provides location information at the state level (see \cref{sec:survey-alp}), whilst the British Household Panel Survey recorded distance between seeds and nominees as ordinal data (see \cref{sec:survey-bhps-usoc}). For the GSS, being of a different ethnicity is equivalent to a nine-year age difference, and having a different sex or religion translates to eight and seven years, respectively. One educational level, as defined in \cref{tbl:survey-gss-coding}, corresponds to three years, as shown in \cref{fig:parameters-and-age-equivalents}~(b).

\subsection{American Life Panel\label{sec:survey-alp}}

The American Life Panel (ALP) is a nationally-representative panel of adults resident in the US~\cite{Pollard2017}.\@ Panel members are interviewed either using their own internet connection or are provided with a web television to access surveys. Data are collected regularly and each survey has a different focus. In 2009, information about social networks and financial literacy was collected. Demographic attributes included sex, age, ethnicity, education, their state of residence, and whether respondents identified as Hispanic. Respondents were also asked to nominate others with whom they ``discuss financial matters''~\cite{Mihaly2009}. We only include nominees who are friends of seeds and exclude kinship ties; see \cref{tbl:survey-alp-coding} for details of harmonisation of attributes across seeds and nominees.

Homophily with respect to sex and ethnicity is slightly stronger than in the GSS, and educational homophily is weaker, but the inferred parameters are broadly consistent with the GSS.\@ Age differences appear to play less of a role in the discussion of financial matters at first sight, but the inference is severely biased for age. We cannot resolve strong age homophily because data are only recorded in 15-year bins: the small age parameter is likely a result of regression dilution caused by measuring ages imprecisely~\cite{Hutcheon2010}. Consequently, the age equivalents in \cref{fig:parameters-and-age-equivalents}~(b) are inflated. Being resident in a different state has by far the most significant impact on friendship formation.

\subsection{British Household Panel Survey and Understanding Society\label{sec:survey-bhps-usoc}}

The British Household Panel Survey (BHPS) was a nationally-representative face-to-face survey in the UK.\@ It was conducted from 1991 to 2008 and has since been replaced by the Understanding Society Survey (USS). Respondents were asked questions about ``their closest friends'' every other year as part of the BHPS and every three years in the USS. Data include sex, age, occupational status, ethnicity (only in the USS), and how far away friends live~\cite{Institute-for-Social-and-Economic-Research2000,Institute-for-Social-and-Economic-Research2017} (see \cref{tbl:survey-usoc-coding} for details).

The inferred kernel parameters are largely consistent with the inference for the ALP and GSS in the US suggesting that friendship formation proceeds similarly in the two countries. We have omitted data from the BHPS in 2008 because we identified errors in the coding which have since been confirmed by the Institute of Social and Economic Research~\cite{Hoffmann2016}. Similarly, we omitted data from the BHPS in 1996 because physical separation between friends was not recorded. As shown in \cref{fig:parameters-and-age-equivalents}, homophily seems to have increased in recent years, but the changes are likely the result of a change in methodology rather than a change in behaviour: the BHPS collected friendship information as part of the main survey, whereas the USS used a self-completion questionnaire~\cite{Hoffmann2017}.

\subsection{Inferred segregation}

\begin{figure}
    \includegraphics{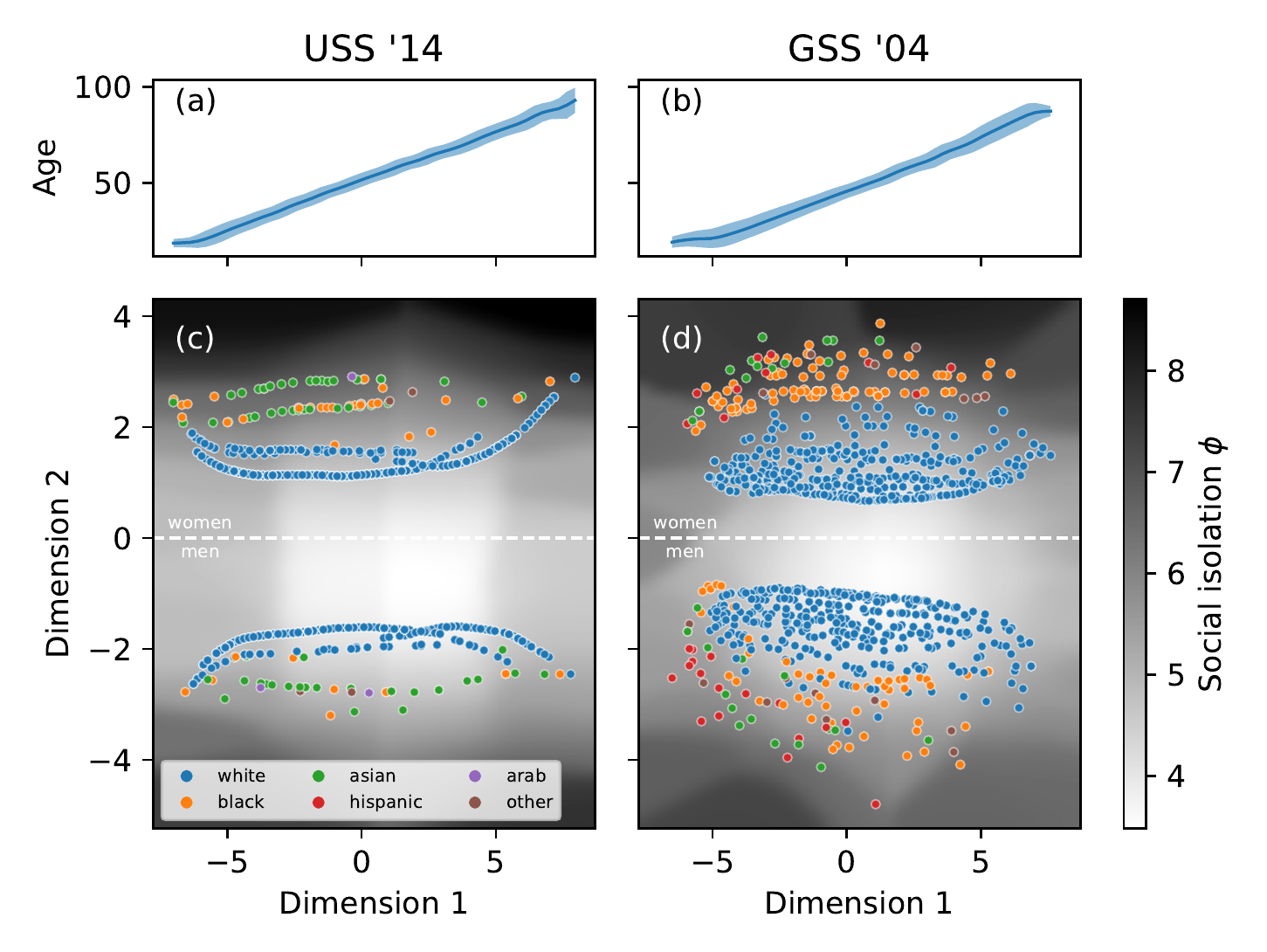}
    \titlecaption{A lower-dimensional embedding of the inter-node distances reveals an interpretable social space in the UK and US.}{Panels~(a) and~(b) show the mean and standard deviation of ages as a function of the first embedding dimension as a solid line and a shaded region, respectively. Panels~(c) and~(d) show a scatter plot of respondents in a two-dimensional embedding space whose coordinates were obtained from the social separation $\varphi$ using multidimensional scaling. The colour of a marker indicates the respondent's ethnicity. The heat map represents a smoothed estimate of the social isolation $\phi$. The ``bands'' of individuals in panel~(c) correspond to different occupational statuses, such as employed or retired.\label{fig:segregation-embedding}}
\end{figure}

To get a better understanding of Blau space and the metric induced by the connectivity kernel, we consider a sample $S$ of 1,000 respondents from the GSS and USS. For each sample, we compute the social separation between pairs of respondents to obtain a distance matrix
\begin{equation*}
    \hat\varphi_{ij} = \transpose{\hat\theta}\left(f(x_j, x_j)-f(x_i, x_j)\right),
\end{equation*}
where $\hat\theta$ is the posterior median of the kernel parameters discussed in \cref{sec:survey-gss,sec:survey-bhps-usoc}. We use multidimensional scaling to embed the respondents in a two-dimensional space~\cite{Borg1996}, as shown in \cref{fig:segregation-embedding}. Panels~(c) and~(d) show the two-dimensional embedding that best approximates the distance matrix in the high-dimensional social space (we omit contribution due to physical space for the USS to make the embeddings comparable).

The first dimension captures the age of respondents, as illustrated in panels~(a) and~(b): the mean age increases monotonically as a function of the first embedding dimension, and the standard deviation is small. We evaluated both statistics using Gaussian kernel smoothing~\cite[chapter~6]{Hastie2009}. The second dimension captures sex and ethnicity as well as occupational status (for the USS) and education and religion (for the GSS). As expected from \cref{eq:segregation-individual-sbm}, ethnic minorities are more isolated and live on the outskirts of society while the ethnic majority occupies the centre. The embedding suggests that age has the strongest impact on how people form friendships.

Panel~(c) and~(d) of \cref{fig:segregation-embedding} also show the social isolation $\phi$ experienced by individuals as a greyscale heat map which we obtained in two steps: first, we evaluated an estimate of the social isolation
\begin{equation*}
    \hat\phi_i = \frac{1}{\card{S}-1}\sum_{j\in S:j\neq i} \hat\varphi_{ij}.
\end{equation*}
Second, we applied Gaussian kernel smoothing to the social isolation in the embedding space. Respondents occupying the centre of society experience little isolation whereas individuals in the periphery are more isolated. For example, members of the ethnic majority experience an average social isolation of 4.53 (4.47--4.59 95\% credible interval) in the USS and 4.46 (4.10--4.84 95\% credible interval) in the GSS. In contrast, the average social isolation amongst ethnic minorities is 4.99 (4.91--5.07 95\% credible interval) and 5.16 (4.73--5.58 95\% credible interval): significantly higher than for the ethnic majority.

\begin{figure}
    \includegraphics{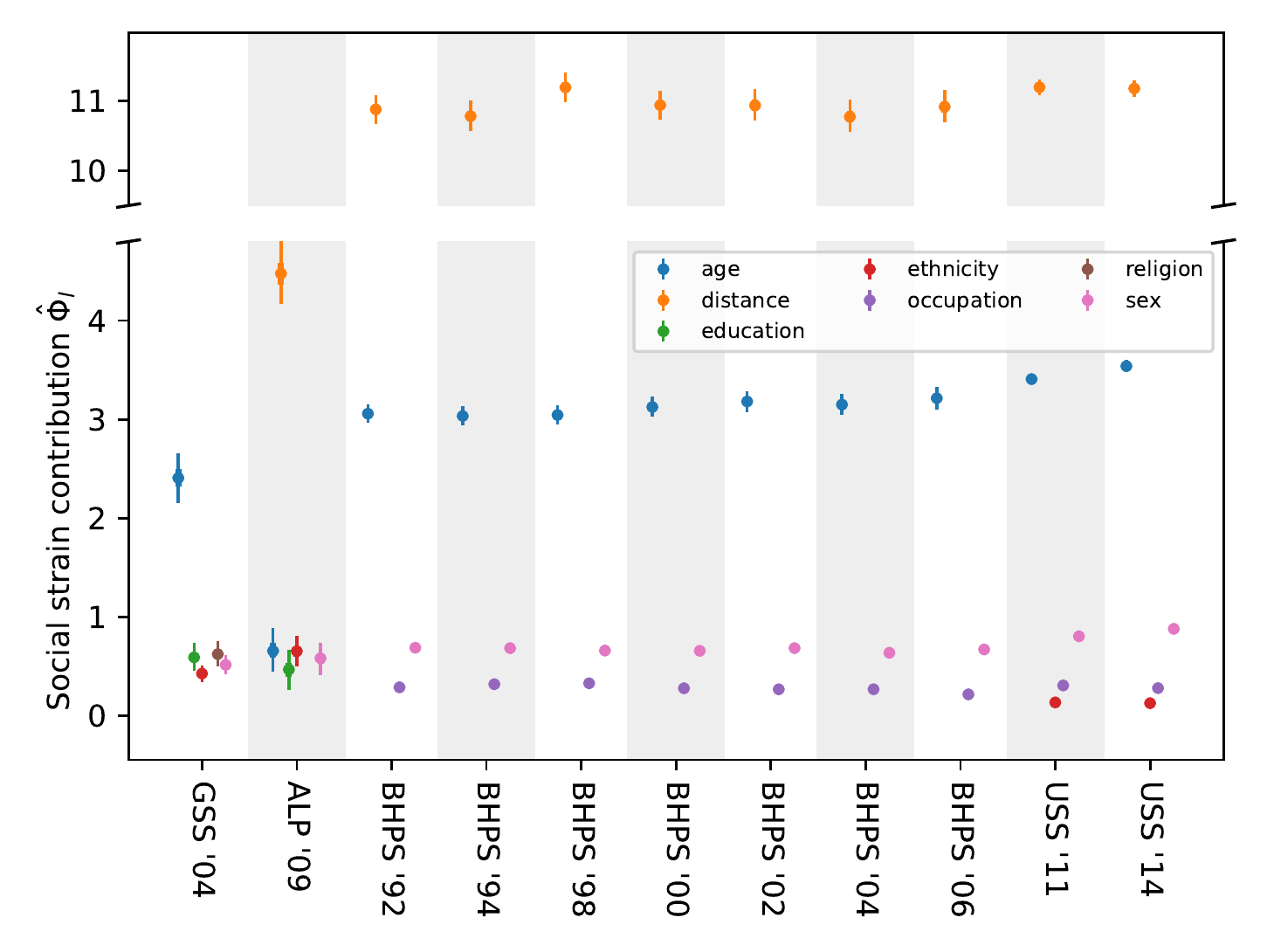}
    \titlecaption{Physical separation and age differences are the most important factors preventing integration of society for all datasets.}{Markers represent the posterior median of the contributions to social strain for each feature, thick error bars the interquartile range, and thin error bars the 95\% credible interval.\label{fig:segregation-survey}}
\end{figure}

Similar to the social separation in \cref{eq:social-separation-contribution}, the social strain can be broken down into components
\begin{equation}
    \Phi_l = \theta_l \int dx\ dy\ \left(f_l(y, y) - f_l(x, y)\right) P(x) P(y)\label{eq:segregation-contributions}
\end{equation}
because it is a linear functional of $\varphi$: each component contributes to the social strain in society. For each of the datasets, we evaluate an estimate of the contributions to the social strain
\begin{equation*}
    \hat\Phi_l = \frac{2\theta_l}{\card{\seeds}(\card{\seeds}-1)} \sum_{i < j\in\seeds:i\neq j} \left[f_l(x_j, x_j)-f_l(x_i, x_j)\right],
\end{equation*}
where the sum is over all distinct pairs of seeds $\seeds$. The contribution $\Phi_l$ quantifies the average social separation due to feature $l$, and it captures the effect of both the connectivity kernel $\rho(x,y)$ and the attribute distribution $P(x)$ in Blau space: neither is sufficient on its own to quantify segregation. Furthermore, $\Phi$ and its contributions in \cref{eq:segregation-contributions} can facilitate comparison across different datasets, as illustrated in \cref{fig:segregation-survey}: they have an intuitive interpretation (the probability of edges decreases with increasing segregation) and universal scale (one unit of segregation has the same effect on edge probability).

As might be expected based on previous studies~\cite{Lambiotte2008, Expert2011, Backstrom2010, Scellato2011, Illenberger2013}, physical space has by far the most significant impact on how people connect with one another. Age homophily places the second strongest restriction social connections, and it is more than three times as restrictive as any other feature except physical space. The ALP survey is an exception because of the regression dilution~\cite{Hutcheon2010} discussed in \cref{sec:survey-alp}. Age homophily is known to be particularly strong for friendship networks~\cite{McPherson2001}. Homophily with respect to sex, ethnicity, education, religion, and occupation make similar, smaller contributions to the segregation of friendships. Importantly, the social strain captures the average contribution to social isolation: it can be small either because there is little homophily or because there is a large majority group, as evident from~\cref{eq:segregation-distributional-factor}. For example, almost 80\% of respondents in the GSS identify as ``white'' and experience little social isolation due to their ethnicity, whereas minority groups experience more social isolation. On average, social isolation due to ethnicity is small.

\section{Discussion\label{sec:discussion}}

We considered a generative model for social networks embedded in Blau space, a space spanned by the demographic attributes of members of society. We developed a family of segregation statistics with a universal scale (since they are based on the common notion of the probability of a social tie), facilitating comparison between datasets collected at different times or in different cultural contexts. Furthermore, the segregation statistics are applicable to mixed attribute types, have a natural reference point, and an intuitive interpretation: the probability to form connections decreases with increasing segregation. They are applicable at different resolutions: connections, individuals, and society as a whole. For certain logistic connectivity kernels, the social separation is a metric for Blau space and allows us to quantify social distance in a principled fashion. The model-based approach facilitates the study of segregation in synthetic social networks, the effect of interventions, and principled quantification of uncertainties in an applied setting.

Based on eleven ego network datasets collected in the United Kingdom and United States, we inferred the connectivity kernel $\rho(x,y)$, i.e.\ the probability for an individual with demographic attributes $x$ to connect with another with attributes $y$. Using the kernel, we compared segregation across different datasets along different demographic dimensions and found that physical distance and age have the most significant impact on how well society is connected. We used the Blau space metric to evaluate the social distance amongst respondents of the GSS and USS. Using a lower-dimensional embedding of the respondents, we explored Blau space, corroborating our findings that age has a profound impact on restricting friendship formation.

The importance of physical distance highlights that our suite of segregation statistics does not distinguish between \emph{choice homophily} and \emph{opportunistic homophily}~\cite{Franz2010}. The former is a result of individuals having an active preference to connect with others who are alike, whereas the latter is a result of individuals being exposed to others who are similar to them. Opportunistic homophily is likely to be a large contributing factor to spatial homophily because individuals are less likely to encounter people who live far from them. Similarly, the statistics do not discriminate between choice homophily and \emph{social influence}, i.e.\ the tendency for people to become more alike given a connection~\cite{Shalizi2011}.

Other features, including sex, ethnicity, religion, education, and occupation, have smaller effects on the presence of connections. Notably, the social strain due to ethnicity in the GSS and ALP is larger than in the USS: first, the effect of ethnicity is more pronounced in the US, as shown in \cref{fig:parameters-and-age-equivalents}. Second, society in the US is more ethnically diverse than in the UK (79\% white in GSS '04 compared with 89\% white in USS '14), increasing strain on average, as exemplified with a SBM in \cref{sec:model-based-segregation}.

Even though we did not expect the kernel parameters to be consistent across countries, time, or even different surveys, people connected with one another in a surprisingly similar fashion across the different datasets (the BHPS and USS are longitudinal studies such that consistent parameter estimates are less surprising). Our observations, together with a study by \textcite{Mossong2008} finding that ``mixing patterns [\ldots] were remarkably similar across different European countries'', suggest that connectivity kernels for friendships vary little across societies and time. To test this hypothesis, further surveys should be conducted in a unified fashion to minimise the effects of question wording and how the survey is administered~\cite{Eagle2015}. In particular, such surveys should explore options to explicitly incentivise nominees to provide data about themselves~\cite{Biernacki1981}: seeds may not recall certain attributes, or nominees may deliberately portray themselves inaccurately~\cite{Bruch2016}. Questions regarding ethnicity should allow respondents to provide multiple answers such that people with mixed ethnic backgrounds can express their identity. Rather than asking respondents about potentially sensitive information, such as income, proxy information that is more readily available---and potentially more informative of how individuals interact with society---could be collected~\cite{Po2012}. Whenever possible, aggregation of attributes such as age into bins should be avoided because it limits the ability to infer kernel parameters~\cite{Hutcheon2010}, as we saw in \cref{sec:survey-alp}. Connectivity kernels should be inferred jointly for all dimensions of Blau space to control for social preferences on correlated attributes.

The connectivity kernel is an intuitive model of how people connect with one another, and it is able to reproduce some of the statistics of real social networks. For example, people in high-density regions of Blau space have been observed to have more connections~\cite{Currarini2009}. However, exponential random graph models may be able to better capture the nature of social networks~\cite{Wimmer2010}. Furthermore, we have used a connectivity kernel that (a)~is symmetric and cannot identify whether there is a status order in society~\cite{Chan2004,Ball2013} and (b)~only depends on differences between individuals. For example, young men tend to have more social contacts than young women, and older women have more social contacts than older men~\cite{Bhattacharya2016}---an observation that cannot be captured by a kernel of the form we have considered. The connectivity kernel could be refined by adding the demographic attributes of the seeds and nominees as features, capturing sociability and popularity, respectively. Furthermore, it should be determined whether the number of ``intervening opportunities''~\cite{Stouffer1940}, absolute distance in Blau space, or a hybrid thereof are most predictive of tie probability. Ultimately, learning a connectivity kernel without a pre-specified parametric form should be considered~\cite{Frolich2006} because they can better capture complex patterns, such as interactions between different demographic attributes. We note that, irrespective of the choice of connectivity kernel, the interpretable segregation statistics considered here remain valid and useful.

\section*{Acknowledgements}

We would like to thank Sahil Loomba and two anonymous referees for useful feedback on the manuscript and acknowledge support from the grant EP/N014529/1.

\printbibliography
\end{refsection}

\appendix
\numberwithin{equation}{section}
\numberwithin{figure}{section}
\numberwithin{table}{section}

\begin{refsection}

The code to reproduce the results and figures is available at \url{https://github.com/tillahoffmann/kernels}.

\section{Evaluation of the observed-data log-likelihood}

\subsection{Weighting to account for non-uniform inclusion probabilities}

Seeds are often not included in the survey uniformly at random, and weights are traditionally used to compensate for the potentially biased selection of respondents~\cite{Kish1992}. Including weights in Bayesian analyses is generally difficult~\cite{Gelman2007}, and, in principle, we should model the data collection process explicitly~\cite[chapter~8]{Gelman2013}. Unfortunately, modelling the data collection process is non-trivial, and we use a weighted pseudo-likelihood instead~\cite{Pfeffermann1996}. In particular, the observed-data log-likelihood from \cref{eq:observed-likelihood-final} becomes
\begin{multline}
    L = \sum_{(i,j):I_{ij} = 1} w_j\left\{A_{ij} + (1 - A_{ij})w_i\right\}
    \\\times\left\{A_{ij}\log\rho_{ij} + (1-A_{ij})\log(1-\rho_{ij})-\log\left[r(0)(1 - \rho_{ij}) + r(1)\rho_{ij}\right]\right\},\label{eq:survey-weighted-log-likelihood}
\end{multline}
where $w_j$ is the weight associated with seed $j$. We clip all weights exceeding the \ordinalnum{95} percentile of the empirical weight distribution and normalise them such that $\sum_{j\in\seeds}w_j=\card{\seeds}$. Censoring the weights, also known as Winsorisation, limits the variance induced by attributing variable importance to different observations at the expense of introducing a small bias~\cite{Kish1992}.

\subsection{Numerical stability\label{app:numeric-stability}}

The evaluation of the observed-data log-likelihood may suffer from numerical instabilities, especially when the connectivity kernel $\rho(x, y, \theta)$ is small. We can mitigate such instabilities for logistic connectiviy kernels, i.e.
\begin{align}
    \rho(x,y,\theta)&=\expit(\transpose{\theta}f(x,y)),\\
    \text{where }\expit(\xi)&=\frac{1}{1+\exp(-\xi)}
\end{align}
is the logistic function. In particular, note that $1 - \expit(\xi) = \expit(-\xi)$ and $\log\expit(\xi)=-\logp\left[\exp(-\xi)\right]$, where $\logp(\xi)=\log(1 + \xi)$ is a numerically stable implementation---even for $\abs{\xi}\ll 1$. Substituting into \cref{eq:observed-likelihood-final} yields
\begin{multline}
    \log P(A|f,\theta,I=1)= -\sum_{(i,j):I_{ij}=1} A_{ij}\logp\exp\left(-\transpose{\theta}f_{ij}\right)+\left(1-A_{ij}\right)\logp\exp\left(\transpose{\theta}f_{ij}\right) \\
    + \logsumexp\left[\log r(0) - \logp\exp\left(\transpose{\theta}f_{ij}\right) + ,\log r(1)-\logp\exp\left(-\transpose{\theta}f_{ij}\right)\right],
\end{multline}
where $\logsumexp(x_1, \ldots, x_k)=\log\sum_{i=1}^k\exp(x_k)$ is a numerically stable implementation.

\section{Validation of inference methodology using synthetic ego network data\label{app:inference-validation}}

\begin{figure}
    \includegraphics{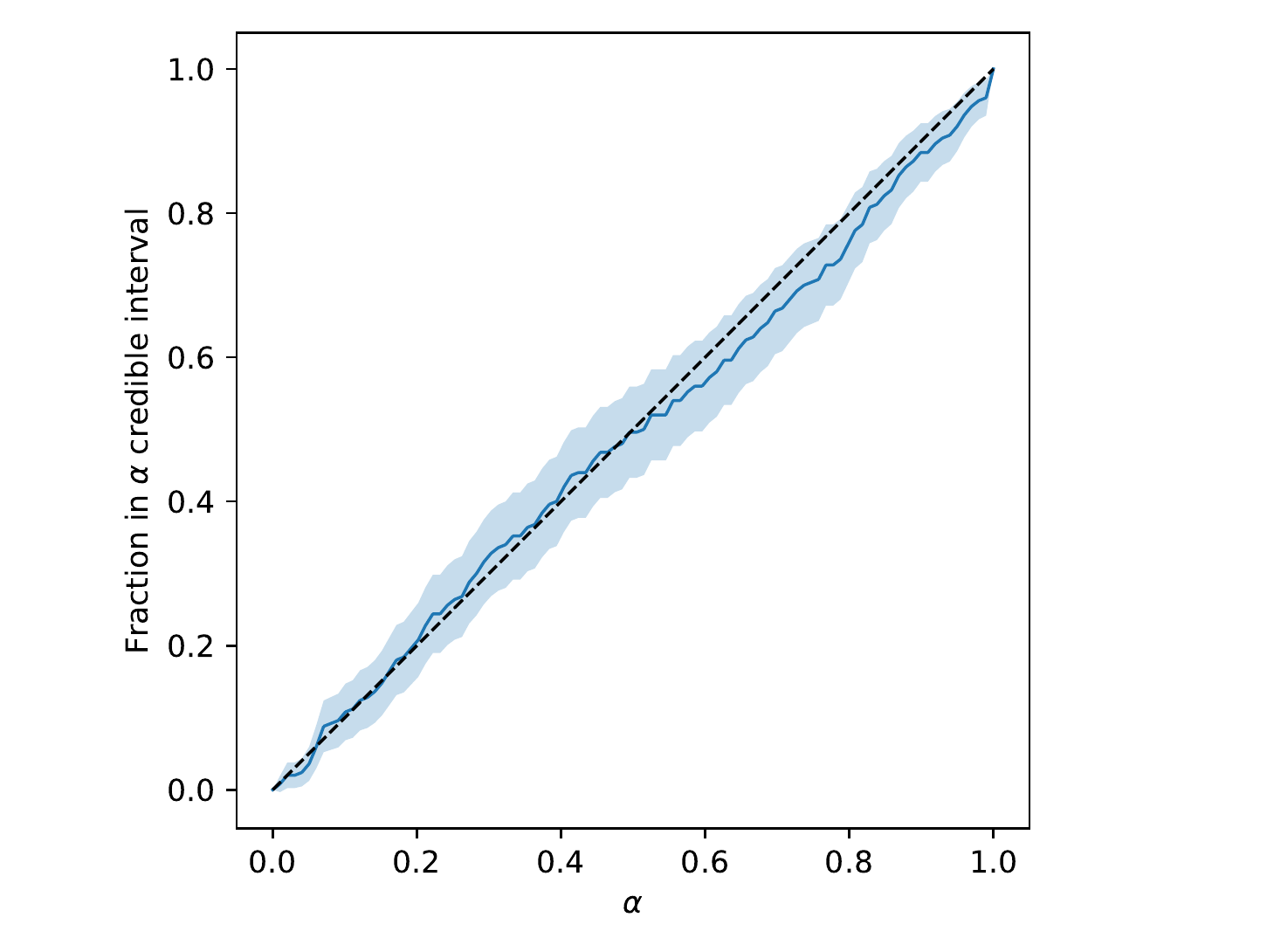}
    \titlecaption{A coverage analysis of posterior credible intervals validates the inference methodology.}{The blue line shows the fraction of inferences for which the true parameter values are contained in the $\alpha$-credible interval of a Laplace approximation of the posterior. The shaded region corresponds to two standard deviations of the mean across 250 simulations.\label{fig:credible-coverage}}
\end{figure}

To test the inference methodology, we conduct a coverage analysis of posterior credible intervals in three steps: fist, we generate 250 synthetic ego network datasets with known kernel parameter values. Second, we infer the parameter posterior distribution. Third, we evaluate the proportion $\lambda(\alpha)$ of true parameter values contained in the $\alpha$-credible interval across multiple synthetic datasets. We expect the true parameter values to be contained in the $\alpha$-credible interval for a proportion $\alpha$ of the synthetic datasets~\cite[section~10.7]{Gelman2013}, i.e.\ $\lambda(\alpha)\approx\alpha$.

In the first step, we draw the positions $x$ of $n=2,000$ nodes uniformly at random from the unit square, and we connect nodes to one another according to a logistic connectivity kernel with features
\begin{equation*}
    f(x_i, x_j) = \left(1, \frac{3 \left[\abs{x_{i1}-x_{j1}} - \frac{1}{3}\right]}{\sqrt{2}}, \frac{3 \left[\abs{x_{i2}-x_{j2}} - \frac{1}{3}\right]}{\sqrt{2}}\right),
\end{equation*}
where the first feature represents the bias, and the last two features capture distance in Blau space. The features were chosen to be standardised in the same fashion as described in \cref{sec:survey-gss}, i.e.\ to have zero mean and a standard deviation of $0.5$~\cite{Gelman2008a}. The corresponding parameter values $\theta$ are drawn from a normal distribution with unit variance and expectation
\begin{equation*}
    \left\langle \theta\right\rangle = \left(-7, 0, 0\right).
\end{equation*}
The expectation $\langle\theta\rangle$ was chosen such that it gives rise to a typical degree of $2,000 \times \expit(-7)\approx 1.8$, similar to the real-world datasets considered in \cref{sec:application}. We select $s=100$ respondents as egos and include all their alters as positive examples. We sample three times as many negative examples by selecting distinct pairs of respondents. If the number of respondents is not sufficient to draw the desired number of distinct control pairs, we use all $\frac{s(s-1)}{2}$ possible distinct pairs of respondents as negative examples.

In the second step, we maximise the log-posterior using a gradient ascent algorithm to obtain the MAP estimate $\theta$ and consider the Laplace approximation of the posterior~\cite[section~4.4]{Bishop2007}, i.e.\ a multivariate normal approximation in the vicinity of the MAP estimate. We evaluate the Hessian $H$ of the negative log-posterior at the MAP estimate to obtain the precision matrix of the Laplace approximation. We do not draw samples from the posterior because the Laplace approximation is computationally more convenient.

In the last step, we consider the quantity
\begin{equation}
   \chi^2 = \transpose{\left(\theta-\hat\theta\right)}H\left(\theta-\hat\theta\right)\label{eq:survey-chi2-statistic}
\end{equation}
which we expect to follow a $\chi^2$ distribution with three degrees of freedom~\cite{Slotani1964}. We calculate the statistic in \cref{eq:survey-chi2-statistic} for each simulation and consider the empirical probability that $\chi^2$ does not exceed the expected quantiles of the $\chi^2$-distribution, as shown in \cref{fig:credible-coverage}. As expected, the $\alpha$-credible interval contains the true parameter values for a fraction $\alpha$ of inferences, validating the inference methodology.

\section{Coding of demographic attributes and feature maps\label{app:coding-and-features}}

In the BHPS and USS, distance was coded as an ordinal variable: less than one mile, less than five miles, less than fifty miles, and more than fifty miles. We rely on having complete information about seeds to evaluate the control features in \cref{eq:logistic-kernel}. But data on the residential location of seeds is not made available to protect their privacy. Fortunately, we can sample the home locations of respondents\footnote{Sampling home locations cannot reproduce any correlation between home location and other demographic attributes.} using population estimates and the geographic boundaries of lower layer super output areas (LSOAs). LSOAs are census reporting areas and have a few thousand inhabitants each~\cite{Lsoas}. We approximate the distribution of distances between residents of the UK using rejection sampling: first, choose a LSOA with probability proportional to the number of residents. Second, choose one of the polygons associated with the LSOA with probability proportional to the area of the polygon (LSOAs are not necessarily contiguous). Third, sample points uniformly inside the bounding box of the polygon until a point inside the polygon is sampled. The last two steps assume uniform population densities within each LSOA, which is unlikely to be problematic as they are small areas. Having sampled the residential location of two respondents, we calculate the distance between respondents and cast to the same ordinal scale as reported for nominees. The USS furthermore distinguishes between friends living more than fifty miles apart but within the UK and friends outside the UK. We discard the latter (2.6\% and 2.1\% of all friends in waves C and F of the USS) because it is difficult to define an appropriate control population. For the BHPS, we implicitly assume that all friends are resident in the UK.

In the USS, respondents could identify with mixed ethnicities, and we coded such responses as a mixed membership. For example, a respondent who indicated ``mixed Asian and White'' would belong to both White and Asian ethnicities. To quantify how different two people are in terms of ethnicity, we define the feature map
\begin{equation*}
    f_\text{ethnicity}(x_i, x_j) = \frac{1}{2} \sum_{l\in E}\abs{x_{il}-x_{jl}},
\end{equation*}
where $E$ is the set of attributes encoding ethnic identity, and ethnicity memberships are normalised such that $\sum_{l\in E}x_{jl}=1$ for all $j$. For example, $f_\text{ethnicity}(x_i, x_j)=1$ for two people $i$ and $j$ one of which identifies as white and the other as black. For a person $i$ identifying as black and another person $j$ identifying as mixed black and Asian, $f_\text{ethnicity}(x_i, x_j)=0.5$.

\begin{table}
    \begin{tabularx}{\columnwidth}{lXXc}
        \toprule 
        Variable & Seed coding & Nominee coding & $f(x, y)$ \\
        \midrule 
        Bias term &\twocol{\dotfill}& $1$\\
        Age & \twocol{Age in years\dotfill} & $\left|x-y\right|$\\
        Sex & \twocol{(a)~Male, (b)~Female\dotfill} & $x\neq y$\\
        Ethnicity & (a)~\{Asian Indian, Chinese, Filipino, Japanese, Korean, Vietnamese, Other Asian\} (b)~Black, (c)~Hispanic, (d)~White, (e)~\{American Indian or Alaska Native, Native Hawaiian, Guamanian or Chamorro, Samoan, Other Pacific Islander, Other\}& (a)~Asian, (b)~Black, (c)~Hispanic, (d)~White, (e)~Other & $x \neq y$\\
        Religion & (a)~Protestant, (b)~Catholic, (c)~Jewish, (d)~None, (e)~\{Other, Buddhism, Hinduism, Islam, Orthodox, Christian, Native American, Nondenominational\} & (a)~Protestant, (b)~Catholic, (c)~Jewish, (d)~None, (e)~Other & $x\neq y$ \\
        Education & (1)~1--6 years, (2)~7--12 years without high school diploma, (3) exactly 12 years with high school diploma, (4) $>12$ years without degree, (5)~Associate degree, (6)~Bachelor's degree, (7)~Professional or graduate degree & (1)~1--6 years, (2)~7--12 years, (3)~High school graduate, (4)~Some college, (5)~Associate degree, (6)~Bachelor's degree, (7)~Professional or graduate degree & $\left|x-y\right|$\\
        \bottomrule 
    \end{tabularx}
    \titlecaption{Coding of the demographic variables for the General Social Survey together with the feature maps for each variable.}{Seeds were provided with 16 options to choose from for their own ethnicity but only five options for their nominees. We attempt to unify the educational coding by combining the number of years of education and formal qualifications of the seeds to approximate the coding of nominees. The bias term in the first row of the table controls the overall edge density.\label{tbl:survey-gss-coding}}
\end{table}

\begin{table}
    \begin{tabularx}{\columnwidth}{lrrrr}
        \toprule 
        Dataset & Egos & Dropped egos & Alters & Dropped alters \\
        \midrule 
        GSS '04 & 2,774 & 38 (1.4\%) & 863 & 158 (15.5\%)\\
        ALP '09 & 2,472 & 0 (0.0\%) & 2,481 & 315 (11.3\%)\\
        BHPS '92 & 9,105 & 1 (\textless 0.1\%) & 18,219 & 506 (2.7\%)\\
        BHPS '94 & 8,728 & 5 (0.1\%) & 17,328 & 469 (2.6\%)\\
        BHPS '98 & 8,584 & 5 (0.1\%) & 16,949 & 565 (3.2\%)\\
        BHPS '00 & 8,281 & 2 (\textless 0.1\%) & 16,255 & 432 (2.6\%)\\
        BHPS '02 & 7,971 & 0 (0.0\%) & 15,716 & 502 (3.1\%)\\
        BHPS '04 & 7,609 & 0 (0.0\%) & 14,971 & 502 (3.2\%)\\
        BHPS '06 & 7,459 & 0 (0.0\%) & 14,558 & 331 (2.2\%)\\
        USS '11 & 36,526 & 199 (0.5\%) & 74,141 & 461 (0.6\%)\\
        USS '14 & 29,082 & 319 (1.1\%) & 61,892 & 611 (1.0\%)\\
        \bottomrule 
    \end{tabularx}
    \titlecaption{Number of retained seeds and nominees for each dataset together with the number of individuals who have been excluded from the analysis because one or more of their demographic attributes were missing.}{Individuals excluded for other reasons, e.g.\ due to being a relative or under the age of 18, are not listed.\label{tbl:survey-sample-size}}
\end{table}

\begin{table}
    \begin{tabularx}{\columnwidth}{lXXc}
        \toprule 
        Variable & Seed coding & Nominee coding & $f(x, y)$ \\
        \midrule 
        Bias term &\twocol{\dotfill}& $1$\\
        Age & Age in years recoded to match the nominee coding & Age brackets in years: (1)~0--20, (2)~21--35, (3)~36--50, (4)~51--65, (5)~66--80, (6)~$>80$ & $\left|x-y\right|$\\
        Sex & \twocol{(a)~Male, (b)~Female\dotfill} & $x\neq y$\\
        Ethnicity & \twocol{(a)~White or Caucasian, (b)~Black or African American, (c)~American Indian or Alaska Native, (d)~Asian or Pacific Islander, (e)~Hispanic (see below) (f)~Other} & $x \neq y$\\
        Hispanic & \twocol{(a)~Yes, (b)~No; ethnicity is coded as ``Hispanic'' if response is affirmative.\dotfill} & \\
        Education & \twocol{The seed coding is more refined but can be reduced to the nominee coding: (1)~Less than \ordinalnum{9} grade, (2)~\ordinalnum{9}--\ordinalnum{12} grade without diploma, (3)~High school graduate, (4)~Some college, (5)~Associate degree, (6)~Bachelor's degree, (7)~Master's degree, (8)~Professional degree or doctorate\dotfill} & $\left|x-y\right|$\\
        State & \twocol{One of 52 states and Washington DC and Puerto Rico\dotfill} & $x \neq y$\\
        \bottomrule 
    \end{tabularx}
    \titlecaption{Coding of the demographic variables for the American Life Panel together with the feature maps for each variable.}{We aggregate the ages and educational attainments of seeds to match the coarser coding of nominees, as shown in \cref{tbl:survey-alp-coding}. The joint effect of ethnic differences and whether people identify as Hispanic is still unclear~\cite{Smith2017a}; for consistency with the GSS, we code the ethnicity of respondents as ``Hispanic'' if they consider themselves to be Hispanic or Latino irrespective of their reported ethnicity. In fact, 46\% of respondents who identified as Hispanic selected ``other'' as their ethnicity, compared with $<1\%$ for respondents who did not identify as Hispanic.\label{tbl:survey-alp-coding}}
\end{table}

\begin{table}
    \noindent\begin{minipage}{\columnwidth}
    \renewcommand*\footnoterule{}
    \begin{tabularx}{\columnwidth}{lXXc}
        \toprule 
        Variable & Seed coding & Nominee coding & $f(x, y)$ \\
        \midrule 
        Bias term &\twocol{\dotfill}& $1$\\
        Age & \twocol{Age in years\dotfill} & $\left|x-y\right|$\\
        Sex & \twocol{(a)~Male, (b)~Female\dotfill} & $x\neq y$\\
        Occupation & (a)~\{Self-employed, employed, maternity leave, unpaid worker in family business\footnote{\label{ft:survey-only-usoc}Only available in Understanding Society.}\}, (b)~\{Unemployed, disabled\}, (c)~\{Full-time student, government training scheme\}, (d)~Family care, (e)~Retired & (a)~\{Full-time employed, part-time employed\}, (b)~Unemployed, (c)~Full-time education, (d)~Full-time housework, (e)~Retired & $x \neq y$\\
        Distance & Only applicable to the seed-nominee pair & (1)~$<1~\mathrm{mile}$, (2)~$<5~\mathrm{miles}$, (3)~$<50~\mathrm{miles}$, (4)~$\geq50~\mathrm{miles}$ but still in the UK &\footnote{We use the ordinal distance reported in the survey as a regression feature and generate control features using Monte Carlo simulation, as discussed in \cref{app:coding-and-features}.}\\
        Ethnicity\footref{ft:survey-only-usoc} & \twocol{Independent binary choices: White, Asian, Black, Other\dotfill} &\footnote{See \cref{app:coding-and-features} for a detailed description of the feature map.}\\
        \bottomrule 
    \end{tabularx}
    \end{minipage}
    \titlecaption{Coding of the demographic variables for the British Household Panel Survey and Understanding Society together with the feature maps for each variable.}{Sex and age have identical coding for seeds and nominees. We aggregate the detailed occupational coding of seeds to match the coding of nominees. In particular, we code women on maternity leave as employed because their occupational status is only temporary, and we code disabled individuals as ``not employed'' because they are unlikely to have the same social opportunities as people in employment.\label{tbl:survey-usoc-coding}}
\end{table}

\printbibliography

\end{refsection}

\end{document}